\documentclass{article}

\usepackage[main, preprint]{neurips_2026}

\usepackage[utf8]{inputenc}
\usepackage[T1]{fontenc}
\usepackage{hyperref}
\usepackage{url}
\usepackage{booktabs}
\usepackage{amsfonts}
\usepackage{amsmath}
\usepackage{amssymb}
\usepackage{amsthm}
\newtheorem{proposition}{Proposition}
\usepackage{nicefrac}
\usepackage{microtype}
\usepackage{xcolor}
\usepackage{graphicx}
\usepackage{multirow}
\usepackage{subcaption}
\usepackage{acro}

\DeclareAcronym{cp}{short=CP, long=conformal prediction}
\DeclareAcronym{mlp}{short=MLP, long=multilayer perceptron}
\DeclareAcronym{cdf}{short=CDF, long=cumulative distribution function}
\DeclareAcronym{atp}{short=ATP, long=American Trends Panel}

\graphicspath{{../figures/publication-ready/}}

\title{Socio-Conformal Calibration in Complex Survey Data:\\Marginal Validity Is Not Enough for Subgroup Reliability}

\author{%
Amir Rafe, Ph.D.\\
Texas State University\\
San Marcos, USA\\
\texttt{amir.rafe@txstate.edu}\\
ORCID: \href{https://orcid.org/0000-0002-4089-2088}{0000-0002-4089-2088}
\And
Subasish Das, Ph.D.\\
Texas State University\\
San Marcos, USA\\
\texttt{subasish@txstate.edu}\\
ORCID: \href{https://orcid.org/0000-0002-1671-2753}{0000-0002-1671-2753}
}

\begin{document}

\maketitle

\begin{abstract}
Machine-learning systems used in survey-based social measurement require uncertainty estimates that are reliable across population subgroups, not merely valid in aggregate. We study ordinal conformal prediction for five-level AI-attitude forecasting on the Pew American Trends Panel (Wave~152; $n{=}4{,}591$; 12 race$\times$education subgroups), comparing standard split conformal, Mondrian (group-specific) conformal, and a regularized Mondrian comparator across 100 respondent-disjoint splits with survey-weighted evaluation. Standard conformal achieves nominal marginal coverage for all four base predictors but leaves weighted subgroup gaps of ${\sim}13$~percentage points. For the strongest predictor (XGBoost), Mondrian worsens the fairness-efficiency trade-off: weighted set size rises by $+0.036$ ($d_z{=}1.66$) while the weighted subgroup gap grows by $+0.013$ ($d_z{=}0.30$). A regularized comparator that shrinks group thresholds toward the global quantile mitigates this instability ($\Delta$~gap~$= -0.001$, $\Delta$~size~$= +0.012$) but does not yield a decisive fairness gain. Failure analysis traces the mechanism to calibration-cell fragmentation interacting with group-specific confidence mismatch. The negative result persists across alternate outcome codings and subgroup granularities, demonstrating that nominal marginal validity is insufficient for subgroup reliability and that naive group-specific calibration is not a dependable fairness remedy in complex survey settings.
\end{abstract}

\acresetall

\section{Introduction}
\label{sec:intro}

\Ac{cp} has emerged as a leading framework for distribution-free uncertainty quantification in machine learning~\citep{vovk2005algorithmic, shafer2008tutorial, papadopoulos2002inductive}. By providing finite-sample coverage guarantees with minimal distributional assumptions, conformal methods ground predictions in rigorous statistical validity, a property increasingly demanded by AI governance frameworks. Recent tutorials~\citep{angelopoulos2023conformal} have accelerated adoption, and extensions now address quantile regression~\citep{romano2019conformalized}, distribution shift~\citep{tibshirani2019conformal, barber2023conformal}, and adaptive online settings~\citep{gibbs2021adaptive}. For survey-based social measurement, where predictions inform policy discussions about public attitudes, reliable uncertainty is not optional: prediction sets that over-cover some demographic groups while under-covering others undermine the scientific basis for equitable policy.

Standard split \ac{cp} guarantees \emph{marginal} coverage: the prediction set contains the true response with probability at least $1 - \alpha$ on average over the test population. However, this aggregate guarantee can mask substantial disparities across demographic subgroups. In complex surveys with design-based weights and intersectional demographic cells, thin calibration cells (some with fewer than 50 observations) create settings where marginal validity and subgroup reliability diverge sharply.

The canonical theoretical remedy is Mondrian \ac{cp}~\citep{vovk2005algorithmic, vovk2012conditional}, which estimates group-specific thresholds to achieve conditional coverage within each predefined subgroup. \citet{romano2020malice} formalize equalized coverage objectives, and \citet{ding2023class} extend Mondrian calibration to the many-class regime, showing that clustered thresholds can recover conditional validity when individual class cells are thin. More recent work proposes adaptive group selection for equalized coverage~\citep{zhou2024conformal}, bridges fairness and efficiency through surrogate-assisted clustering~\citep{gao2025bridging}, and studies multi-group learning for conformal validity~\citep{deng2023group}. These advances notwithstanding, conditional validity beyond predefined groups faces fundamental impossibility results~\citep{lei2014distribution, foygel2021limits}, motivating alternative approaches through multivalid~\citep{jung2023batch, hebert2018multicalibration} and conditional~\citep{gibbs2025conformal} guarantees. Critically, \citet{cresswell2024conformal} demonstrate that \ac{cp} sets can themselves cause disparate impact, underscoring that nominal fairness corrections may introduce new equity concerns.

Despite this theoretical progress, the empirical behavior of Mondrian \ac{cp} in realistic survey settings remains poorly characterized. In such settings, calibration cells may contain as few as 32 observations, outcome distributions vary across groups, and design-based weights create effective sample sizes smaller than nominal counts. Whether Mondrian's theoretical benefits translate into practical fairness improvements under these conditions is an open question.

Our empirical setting centers on public attitudes toward artificial intelligence, a domain where demographic variation in trust and perceived impact is well documented~\citep{zhang2019artificial, brauner2023public, novozhilova2024looking}. Trust in AI systems varies across age, education, and racial groups~\citep{shin2021effects, glikson2020human, neudert2020global}, making intersectional subgroup analysis essential for responsible deployment of predictive models in this space. Ordinal response scales are standard in survey measurement~\citep{agresti2010analysis, mccullagh1980regression}, and \ac{cdf}-based nonconformity scores extend \ac{cp} to ordered outcomes~\citep{lu2022fair, gupta2022nested}. Risk-controlling prediction sets~\citep{bates2021distribution} provide a general framework for structured losses. Complex survey designs require design-based weights for population-representative inference~\citep{lumley2010complex}, and \ac{cp} under covariate shift~\citep{tibshirani2019conformal} provides the closest theoretical connection, though formal guarantees for survey-weighted conformal calibration remain an open problem. The trade-off between fairness criteria is well established~\citep{chouldechova2017fair, kleinberg2017inherent}, and intersectional approaches~\citep{kearns2018preventing} together with calibration-fairness connections~\citep{pleiss2017calibration} inform our subgroup analysis.

We address this gap with a 100-split empirical study of ordinal \ac{cp} on the Pew \ac{atp} (Wave~152, $n{=}4{,}591$), comparing standard, Mondrian, and regularized Mondrian calibration across 12 race$\times$education subgroups with survey-weighted evaluation. Our study is organized around five research questions: \textbf{RQ1.}~Does standard \ac{cp} achieve nominal marginal coverage? \textbf{RQ2.}~Do subgroup disparities persist under standard \ac{cp}? \textbf{RQ3.}~Do group-aware threshold variants reduce subgroup coverage gaps, and at what efficiency cost? \textbf{RQ4.}~How much do survey weights shift the fairness picture? \textbf{RQ5.}~How much does predictor choice affect set efficiency once coverage is calibrated?

This paper makes three contributions. First, we provide a 100-split verified empirical study of ordinal \ac{cp} in a complex survey setting with weighted subgroup evaluation, showing that nominal marginal validity coexists with meaningful subgroup disparities across race$\times$education cells. Second, our central negative result demonstrates that vanilla Mondrian worsens the fairness-efficiency trade-off for the strongest practical predictor. Third, we introduce a regularized Mondrian comparator based on James-Stein shrinkage~\citep{james1961estimation, efron1975data} and present an explicit failure analysis identifying calibration-cell fragmentation and group-specific confidence mismatch as the mechanisms behind Mondrian's instability.

\section{Data}
\label{sec:data}

We use the Pew Research Center \ac{atp}, Wave~152 (August 2024)~\citep{pewresearch2024w152}. The \ac{atp} is a nationally representative online panel of U.S.\ adults recruited via address-based sampling with design-based weights calibrated to the U.S.\ adult population. Our analytic sample comprises $n{=}4{,}591$ respondents with valid outcome and subgroup labels.

\textbf{Outcome.} The primary outcome is a five-level ordinal scale of expected AI impact (\textsc{aichange\_w152}): \emph{Very positive} (1), \emph{Somewhat positive} (2), \emph{Neither positive nor negative} (3), \emph{Somewhat negative} (4), and \emph{Very negative} (5). ``Not sure'' ($n{=}757$) and refusal responses are excluded from the primary analysis and tested in sensitivity branches (Section~\ref{sec:design}).

\textbf{Predictors.} We use 66 survey covariates drawn from five thematic families: demographics (29 items), concern and excitement toward AI (27 items), AI familiarity and use (6 items), regulation and confidence (3 items), and personal benefit/harm (1 item). All predictors are contemporaneous Wave~152 variables; items that directly restate the outcome are excluded.

\textbf{Protected groups.} We define $G{=}12$ predefined subgroups as the full intersection of four race/ethnicity categories and three education levels. Table~\ref{tab:groups} shows the resulting cross-tabulation. Cell sizes span an order of magnitude: the thinnest cell (Asian/Other $|$ College+) yields only 32 calibration observations per split, while the largest (White non-Hispanic $|$ HS or less) yields 355. This imbalance creates the thin-cell conditions central to our analysis.

\begin{table}[h]
  \caption{\textbf{Protected group cross-tabulation.} $n$ = unweighted count; parenthetical = mean calibration-cell size $n_\mathrm{cal}$ per 30\% split. \colorbox{red!12}{Shaded} cells have $n_\mathrm{cal} < 50$, the regime where Mondrian threshold estimation becomes unstable.}
  \label{tab:groups}
  \centering
  \small
  \begin{tabular}{l rrr r}
    \toprule
    & \multicolumn{3}{c}{Education} \\
    \cmidrule(lr){2-4}
    Race / Ethnicity & College+ & Some college & HS or less & Total \\
    \midrule
    White non-Hispanic  & 664\,(199)  & 818\,(245) & 1{,}184\,(355) & 2{,}666 \\
    Hispanic            & 209\,(63)   & 266\,(80)  & 232\,(70)      & 707 \\
    Black non-Hispanic  & 151\,\colorbox{red!12}{(45)}  & 175\,(53)  & 198\,(59)      & 524 \\
    Asian / Other       & 105\,\colorbox{red!12}{(32)}  & 144\,\colorbox{red!12}{(43)}  & 445\,(134)     & 694 \\
    \midrule
    Total               & 1{,}129     & 1{,}403    & 2{,}059        & 4{,}591 \\
    \bottomrule
  \end{tabular}
\end{table}

\textbf{Survey weights.} The Pew-provided design-based weight (\textsc{weight\_w152}) is used for all population-representative evaluation summaries. Weights are not used inside the conformal calibration step except in the weighted-threshold sensitivity variant.

\textbf{Notation.} Let $\mathcal{D} = \{(X_i, Y_i, W_i, G_i)\}_{i=1}^n$ where $X_i \in \mathbb{R}^{66}$ is the covariate vector, $Y_i \in \{1,\dots,5\}$ the ordinal response, $W_i > 0$ the survey weight, and $G_i \in \{1,\dots,12\}$ the group label. Each random split partitions $\mathcal{D}$ into respondent-disjoint training ($|I_\mathrm{tr}|{=}1{,}834$), calibration ($|I_\mathrm{cal}|{=}1{,}383$), and test ($|I_\mathrm{te}|{=}1{,}374$) sets with guarded stratification across outcome$\times$group cells.

\section{Methods}
\label{sec:methods}

Figure~\ref{fig:overview} summarizes the full experimental pipeline. Survey data are split into train, calibration, and test sets across 100 random partitions; four base predictors produce ordinal \ac{cdf} scores; standard, Mondrian, and regularized Mondrian conformal methods calibrate prediction sets at $\alpha{=}0.10$; and all evaluation metrics are survey-weighted.

\begin{figure}[t]
  \centering
  \includegraphics[width=\textwidth]{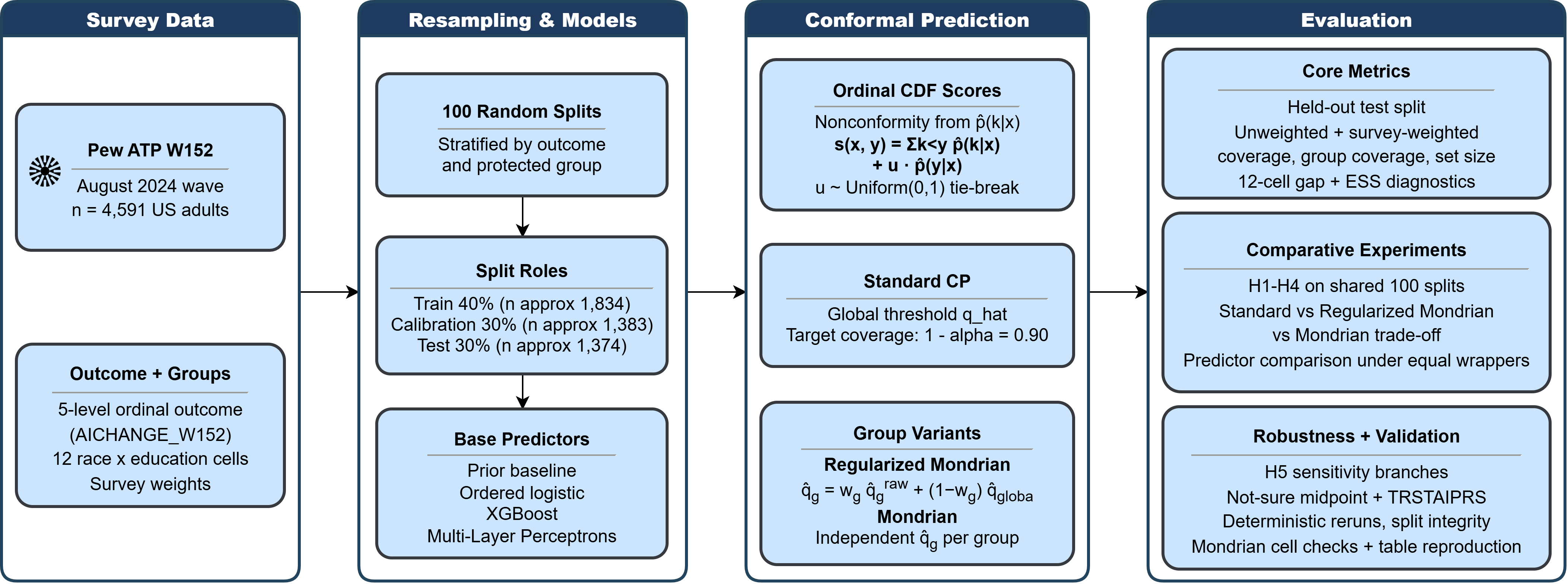}
  \caption{\textbf{Experimental pipeline.} Survey data ($n{=}4{,}591$) are split into train/calibration/test sets across 100 random partitions. Four base predictors produce ordinal \acs{cdf} scores. Standard, Mondrian, and regularized Mondrian conformal methods calibrate prediction sets at $\alpha{=}0.10$. All evaluation metrics are survey-weighted.}
  \label{fig:overview}
\end{figure}

\subsection{Base predictors}
\label{sec:base}

Each base predictor produces estimated class probabilities $\hat{p}_c(x) \approx \Pr(Y{=}c \mid X{=}x)$ for $c = 1,\dots,5$, yielding a fitted \ac{cdf} $\hat{F}(y \mid x) = \sum_{c \le y} \hat{p}_c(x)$. We compare four models spanning calibration quality and capacity. The prior baseline uses marginal class frequencies from the training set with no covariates. The ordered logistic model~\citep{mccullagh1980regression} is a proportional-odds specification with 14 demographic and attitudinal features. XGBoost ~\citep{chen2016xgboost} uses gradient-boosted trees with all 66 features and serves as the strongest practical predictor. A two-layer \ac{mlp} also uses all 66 features. All models are trained on $I_\mathrm{tr}$ only; their predicted probabilities define the nonconformity scores used for calibration on $I_\mathrm{cal}$.

\subsection{Conformal calibration}
\label{sec:conformal}

\textbf{Nonconformity score.}
\label{sec:score}
For ordinal outcomes we use a randomized \ac{cdf}-based score that preserves the label ordering:
\begin{equation}
  s(x, y; U) = 1 - \hat{F}(y \mid x) + U \cdot \hat{p}_y(x), \qquad U \sim \mathrm{Unif}(0,1).
  \label{eq:score}
\end{equation}
Smaller scores indicate labels more compatible with the fitted distribution. The uniform variate $U$ breaks ties induced by discrete probabilities. For each calibration observation $i \in I_\mathrm{cal}$, we compute $S_i = s(X_i, Y_i; U_i)$ with an independent draw $U_i$.

\textbf{Standard split conformal.}
\label{sec:standard}
The global threshold is
\begin{equation}
  \hat{q}_\alpha = \mathrm{Quantile}_{1-\alpha}\!\bigl(\{S_i : i \in I_\mathrm{cal}\} \cup \{\infty\}\bigr),
  \label{eq:standard_threshold}
\end{equation}
and the prediction set for a new test point $x$ is $\hat{C}_\alpha(x) = \{y \in \{1,\dots,5\} : s(x, y; U) \le \hat{q}_\alpha\}$.
Under exchangeability of calibration and test data, this guarantees marginal coverage $\Pr(Y_{n+1} \in \hat{C}_\alpha(X_{n+1}) \mid \mathcal{D}_\mathrm{tr}) \ge 1 - \alpha$~\citep{vovk2005algorithmic, shafer2008tutorial}. We set $\alpha = 0.10$ throughout, targeting 90\% nominal coverage.

\textbf{Mondrian conformal.}
\label{sec:mondrian}
Mondrian \ac{cp}~\citep{vovk2005algorithmic, vovk2012conditional} replaces the global threshold with group-specific thresholds. Let $I_\mathrm{cal}^{(g)} = \{i \in I_\mathrm{cal} : G_i = g\}$ with $n_g = |I_\mathrm{cal}^{(g)}|$. The Mondrian threshold for group $g$ is
\begin{equation}
  \hat{q}_{g,\alpha} = \mathrm{Quantile}_{1-\alpha}\!\bigl(\{S_i : i \in I_\mathrm{cal}^{(g)}\} \cup \{\infty\}\bigr),
  \label{eq:mondrian_threshold}
\end{equation}
and the prediction set becomes $\hat{C}^\mathrm{Mond}_\alpha(x) = \{y : s(x,y;U) \le \hat{q}_{g(x),\alpha}\}$. Under within-group exchangeability this provides group-conditional coverage $\Pr(Y_{n+1} \in \hat{C}^\mathrm{Mond}_\alpha(X_{n+1}) \mid G_{n+1}{=}g, \mathcal{D}_\mathrm{tr}) \ge 1 - \alpha$. In our setting, $n_g$ ranges from 32 to 355 (Table~\ref{tab:groups}), creating substantial variance in threshold estimates for thin cells.

\textbf{Regularized Mondrian.}
\label{sec:regmond}
To mitigate thin-cell instability, we introduce a regularized variant that shrinks each group threshold toward the global value:
\begin{equation}
  \hat{q}_{g,\alpha}^\mathrm{reg} = w_g \cdot \hat{q}_{g,\alpha} + (1 - w_g) \cdot \hat{q}_\alpha, \qquad w_g = \frac{n_g}{n_g + \lambda},\quad \lambda = 50.
  \label{eq:reg_threshold}
\end{equation}
For large cells ($n_g \gg \lambda$), $w_g \to 1$ and the threshold is nearly group-specific; for small cells ($n_g \ll \lambda$), $w_g \to 0$ and it falls back to the global value. Shrinkage weights in our data range from 0.39 to 0.88. This construction is motivated by James-Stein shrinkage~\citep{james1961estimation, efron1975data} and partial pooling in hierarchical models. We position it as an empirical comparator, not a theorem-backed contribution.

\subsection{Survey-weighted evaluation metrics}
\label{sec:metrics}

All evaluation uses the design-based survey weight $W_i$. We define four metrics. Weighted marginal coverage:
\begin{equation}
  \widehat{\mathrm{Cov}}_w = \frac{\sum_{i \in I_\mathrm{te}} W_i \cdot \mathbf{1}\{Y_i \in \hat{C}_i\}}{\sum_{i \in I_\mathrm{te}} W_i}.
  \label{eq:wtd_cov}
\end{equation}
Weighted subgroup coverage for group $g$:
\begin{equation}
  \widehat{\mathrm{Cov}}_{w,g} = \frac{\sum_{i \in I_\mathrm{te}} W_i \cdot \mathbf{1}\{G_i=g\} \cdot \mathbf{1}\{Y_i \in \hat{C}_i\}}{\sum_{i \in I_\mathrm{te}} W_i \cdot \mathbf{1}\{G_i=g\}}.
  \label{eq:wtd_group_cov}
\end{equation}
Weighted subgroup gap, our primary fairness summary:
\begin{equation}
  \widehat{\mathrm{Gap}}_w = \max_g \widehat{\mathrm{Cov}}_{w,g} - \min_g \widehat{\mathrm{Cov}}_{w,g}.
  \label{eq:wtd_gap}
\end{equation}
Weighted average set size, our efficiency measure:
\begin{equation}
  \widehat{\mathrm{Size}}_w = \frac{\sum_{i \in I_\mathrm{te}} W_i \cdot |\hat{C}_i|}{\sum_{i \in I_\mathrm{te}} W_i}.
  \label{eq:wtd_size}
\end{equation}

\textbf{Statistical inference.} We run 100 random respondent-disjoint splits. For each split $r$, we compute paired deltas $\Delta_r = \mathrm{metric}^{(\mathrm{method})}(r) - \mathrm{metric}^{(\mathrm{standard})}(r)$. We report means with 95\% confidence intervals $\bar{\Delta} \pm 1.96 \cdot \mathrm{SE}(\Delta)$ and Cohen's $d_z = \bar{\Delta} / \mathrm{SD}(\Delta)$ as a paired effect size.

\subsection{Experimental protocol}
\label{sec:design}

Our five confirmatory questions (RQ1--RQ5) are stated in Section~\ref{sec:intro}. We generate 100 random respondent-disjoint 40/30/30 partitions with guarded stratification across outcome$\times$group cells; no respondent appears in more than one partition within any split. Two sensitivity branches test robustness to outcome coding: (A)~merge ``Not sure'' responses into the ordinal midpoint ($n{=}5{,}348$), and (B)~use the binary direct-trust item \textsc{trstaiprs\_w152} ($n{=}4{,}071$). An alternate-group extension evaluates race-only (4 groups) and education-only (3 groups) subgroup families.

\section{Results}
\label{sec:results}

\subsection{Main results: marginal validity and the Mondrian trade-off}
\label{sec:res_marginal}

Table~\ref{tab:main} summarizes the main results. Under standard split conformal calibration, all four base predictors achieve weighted marginal coverage within 1~percentage point of the 90\% nominal target (range: 0.900 for ordered logistic to 0.910 for the prior baseline), confirming marginal validity across 100 independent splits. However, weighted subgroup gaps remain substantial: the max$-$min gap across the 12 race$\times$education cells ranges from 0.130 (\acs{mlp}) to 0.152 (ordered logistic). For XGBoost, the weighted subgroup gap is 0.135, meaning the best- and worst-covered subgroups differ by 13.5~percentage points despite aggregate validity.

For XGBoost, Mondrian conformal does not reduce the weighted subgroup gap. Instead, it increases both the gap (from 0.135 to 0.147, $\Delta = {+}0.013$) and the weighted set size (from 3.240 to 3.275, $\Delta = {+}0.036$). Figure~\ref{fig:tradeoff} visualizes this trade-off across all model$\times$method combinations. The pattern is consistent for the three informative predictors (XGBoost, \acs{mlp}, ordered logistic): Mondrian produces the largest weighted gap \emph{and} the largest set size within each model family. The prior baseline is an exception (see Discussion), where both Mondrian variants reduce gap and set size. The set-size increase is statistically significant with a large paired effect size ($d_z = 1.66$, $p < 0.001$); the gap increase is directionally consistent but noisier ($d_z = 0.30$). Full paired deltas with confidence intervals appear in Table~\ref{tab:paired} (Appendix~\ref{app:paired}).

The regularized Mondrian comparator ($\lambda = 50$) mitigates both dimensions: the weighted gap decreases slightly relative to standard ($\Delta = -0.001$) with a smaller set-size penalty ($\Delta = {+}0.012$). For the prior baseline, where the underlying predictor is uninformative, regularized Mondrian achieves the smallest weighted gap of any configuration (0.105).

\begin{table}[t]
  \caption{\textbf{Main results.} 100-split survey-weighted means ($\alpha = 0.10$). Wtd.\ Gap $= \max_g \widehat{\mathrm{Cov}}_{w,g} - \min_g \widehat{\mathrm{Cov}}_{w,g}$. Best gap within XGBoost in \textbf{bold}; XGBoost standard shown with $\diamond$ as the reference configuration used in the main text.}
  \label{tab:main}
  \centering
  \small
  \begin{tabular}{llccc}
    \toprule
    Model & Method & Wtd.\ Coverage & Wtd.\ Set Size & Wtd.\ Gap \\
    \midrule
    \multirow{3}{*}{XGBoost}
      & Standard$^{\diamond}$ & 0.905 & 3.240 & 0.135 \\
      & Mondrian & 0.908 & 3.275 & 0.147 \\
      & Reg.\ Mondrian & 0.906 & 3.251 & \textbf{0.133} \\
    \midrule
    \acs{mlp} & Standard & 0.904 & 3.279 & 0.130 \\
    Ordered logistic & Standard & 0.900 & 3.346 & 0.152 \\
    Prior baseline & Standard & 0.910 & 3.740 & 0.138 \\
    \bottomrule
  \end{tabular}
\end{table}

\begin{figure}[t]
  \centering
  \includegraphics[width=\textwidth]{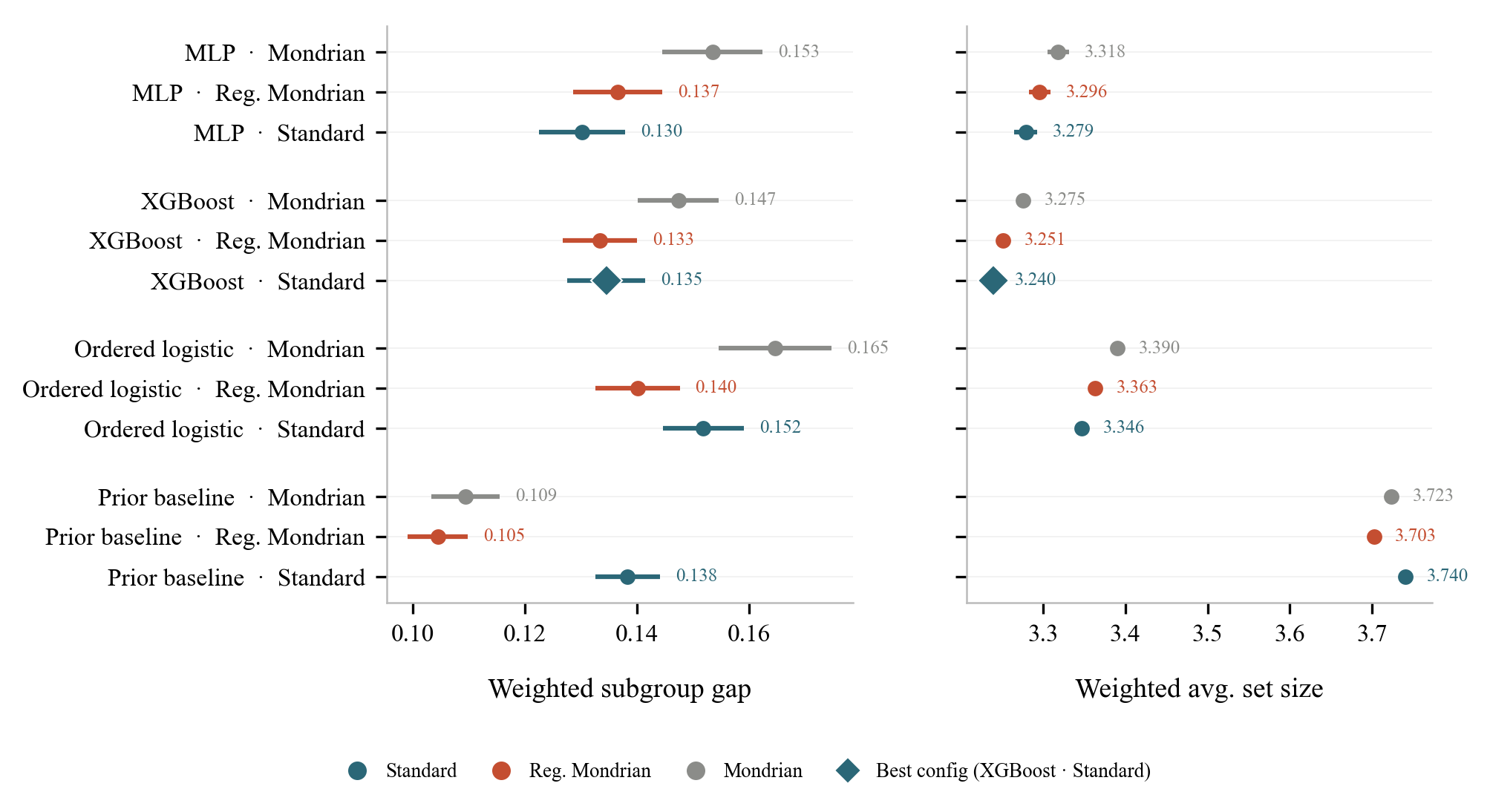}
  \caption{\textbf{Fairness-efficiency trade-off.} All model$\times$method combinations (100-split weighted means with 95\% CIs). \textbf{Left}: weighted subgroup gap (lower is fairer). \textbf{Right}: weighted average set size (lower is more efficient). For XGBoost, Mondrian (gray) increases both gap and set size relative to standard (teal); regularized Mondrian (red) reduces the gap penalty. Diamond marks the XGBoost-standard reference configuration.}
  \label{fig:tradeoff}
\end{figure}

\subsection{Per-group coverage and failure analysis}
\label{sec:res_pergroup}

Figure~\ref{fig:subgroup} decomposes XGBoost coverage by subgroup. Under standard conformal, coverage ranges from 0.867 (Black non-Hispanic $|$ College+, $n{=}151$) to 0.923 (Asian/Other $|$ Some college, $n{=}144$), a 5.6~percentage-point spread. This quantity is the range of \emph{groupwise mean} coverages. By contrast, the 0.135 value in Table~\ref{tab:main} is the \emph{mean of splitwise max$-$min gaps}; these summaries need not match numerically. The three most under-covered groups are Black non-Hispanic $|$ College+ (0.867), Hispanic $|$ College+ (0.887), and Hispanic $|$ HS or less (0.888), all falling below the 90\% target. Mondrian overcorrects for some under-covered groups (e.g., raising Black non-Hispanic $|$ College+ to 0.925) but inflates coverage in already-overcovered groups, widening the overall gap. Table~\ref{tab:pergroup} (Appendix~\ref{app:pergroup}) reports all 12 per-group coverage values.

Figure~\ref{fig:failure} traces the mechanism behind Mondrian's negative result. The heatmap shows per-group deltas (Mondrian minus standard) for coverage, set size, and threshold, sorted by calibration-cell size. The smallest cells experience the largest perturbations: Black non-Hispanic $|$ College+ ($n_\mathrm{cal}{=}45$) sees a set-size increase of $+0.442$ labels under vanilla Mondrian, while the largest cell (White non-Hispanic $|$ HS or less, $n_\mathrm{cal}{=}355$) shifts by only $-0.033$. Regularized Mondrian (right panel) compresses these perturbations: the maximum set-size delta drops from $+0.442$ to $+0.159$. The underlying mechanism is calibration-cell fragmentation interacting with group-specific confidence mismatch. In thin cells, the group-specific threshold $\hat{q}_{g,\alpha}$ is estimated from few observations, causing it to overreact to local score-distribution idiosyncrasies. When the base predictor's confidence pattern varies across groups, as it does for XGBoost which fits group-varying feature interactions, this overreaction amplifies rather than attenuates coverage disparities.

\begin{figure}[t]
  \centering
  \includegraphics[width=\textwidth]{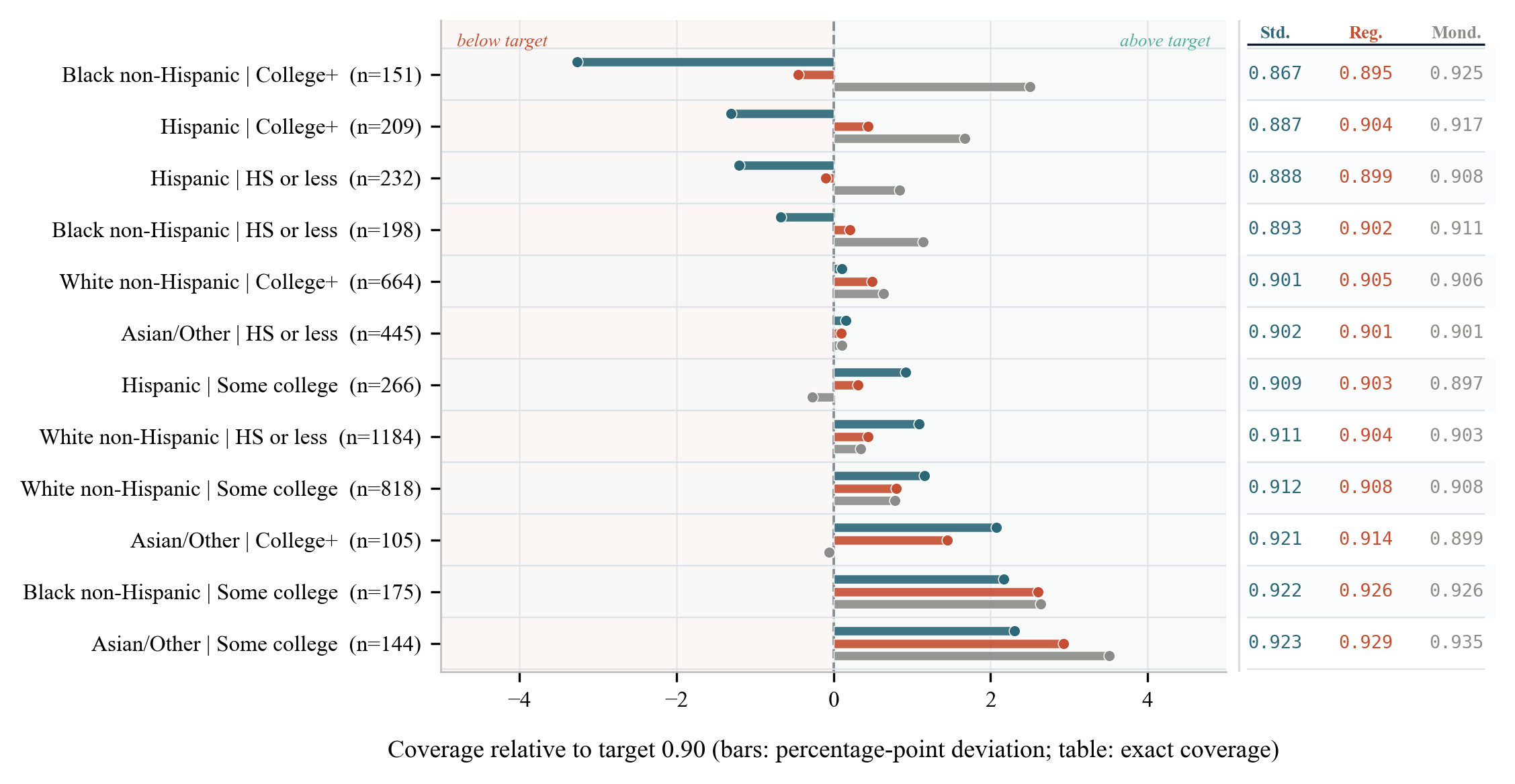}
  \caption{\textbf{XGBoost per-group weighted coverage.} Coverage relative to 90\% target across 12 race$\times$education cells (100-split means). Bars show percentage-point deviation from target; right table shows absolute coverage. Groups ordered by standard coverage. Standard (teal) under-covers minority $|$ College+ groups; Mondrian (gray) overcorrects, widening the spread.}
  \label{fig:subgroup}
\end{figure}

\begin{figure}[t]
  \centering
  \includegraphics[width=\textwidth]{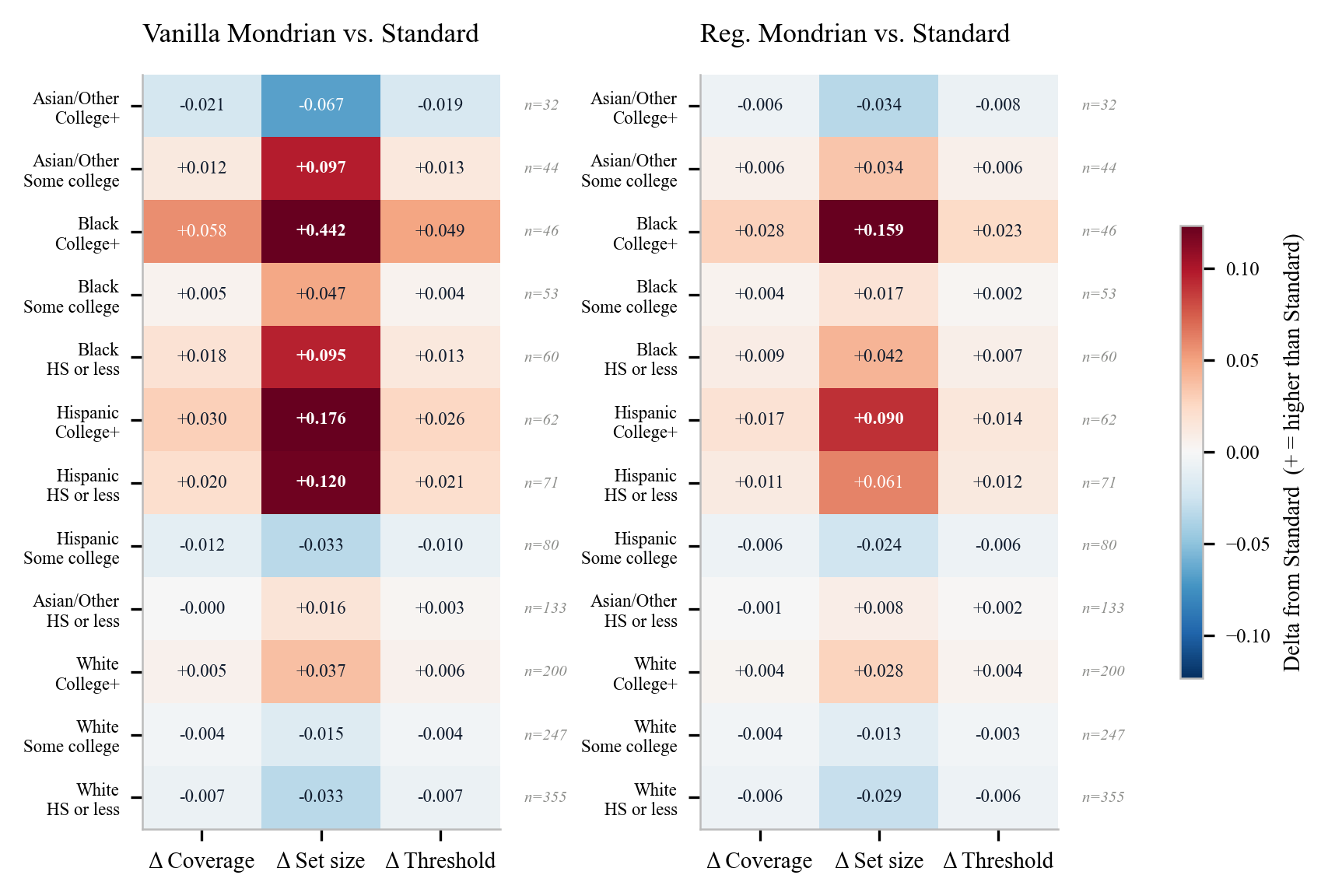}
  \caption{\textbf{Per-group deltas for XGBoost.} Method minus standard, sorted by calibration-cell size ($n_\mathrm{cal}$). \textbf{Left}: Mondrian. \textbf{Right}: regularized Mondrian. Thin cells (top rows) experience the largest threshold overreaction; shrinkage compresses perturbations toward zero.}
  \label{fig:failure}
\end{figure}

\subsection{Survey weights and predictor efficiency}
\label{sec:res_weights}

Switching from unweighted to survey-weighted evaluation increases the subgroup gap by approximately $+0.020$ for XGBoost under standard conformal (from 0.115 unweighted to 0.135 weighted), consistent across all methods (see Appendix~\ref{app:weights}). Coverage shifts are negligible ($+0.002$), and weighted set sizes are slightly smaller ($-0.025$). The effect arises because design-based weights up-weight underrepresented groups whose coverage deviates from the mean. While modest in magnitude, this consistent upward shift confirms that unweighted evaluation systematically understates subgroup disparities in complex surveys.

We also tested survey-weighted thresholds. They perform poorly: for XGBoost, weighted Mondrian drops weighted coverage to 0.898 and widens the subgroup gap to 0.177, worse than both standard (0.135) and unweighted Mondrian (0.147); the same pattern holds across all four predictors (Table~\ref{tab:full_results}, Appendix~\ref{app:allresults}). In our setting, weights matter for \emph{evaluation}, but naive weight injection into calibration is not a reliable remedy.

All predictors achieve comparable marginal coverage (Table~\ref{tab:main}), but efficiency varies substantially. XGBoost produces the smallest prediction sets (3.240 labels on average out of 5), followed by \acs{mlp} (3.279), ordered logistic (3.346), and the prior baseline (3.740). The prior baseline's prediction sets are approximately 15\% wider than XGBoost's while achieving similar coverage, confirming that predictor quality translates directly into conformal efficiency: better-calibrated class probabilities yield tighter sets without sacrificing validity.

\subsection{Sensitivity analyses confirm robustness}
\label{sec:res_sensitivity}

Figure~\ref{fig:sensitivity} shows that the central negative result persists across outcome operationalizations. When ``Not sure'' responses are merged into the ordinal midpoint ($n{=}5{,}348$), weighted gaps are somewhat smaller (standard: 0.124, Mondrian: 0.143) but Mondrian still widens the gap. Under the binary trust item ($n{=}4{,}071$), all gaps are larger (standard: 0.156, Mondrian: 0.187) and the Mondrian penalty is more pronounced ($\Delta = {+}0.031$). In both branches, regularized Mondrian falls between standard and Mondrian, confirming its stabilizing role. Alternate subgroup families (race-only with 4 groups, education-only with 3 groups) yield qualitatively identical conclusions.

\begin{figure}[t]
  \centering
  \includegraphics[width=0.85\textwidth]{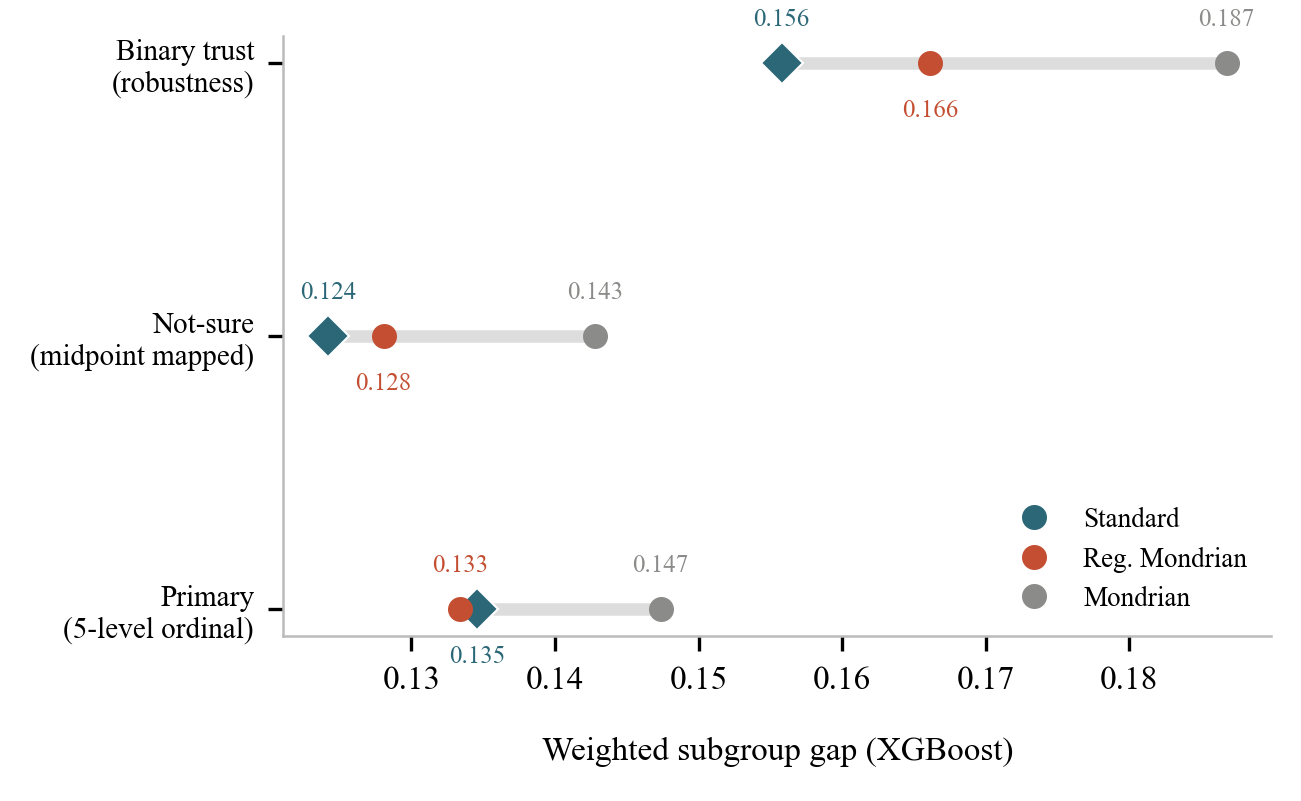}
  \caption{\textbf{Weighted subgroup gap across outcome operationalizations.} XGBoost results (100-split means). Mondrian (gray) widens the gap relative to standard (teal diamond) in all branches. Regularized Mondrian (red) consistently falls between the two.}
  \label{fig:sensitivity}
\end{figure}

\subsection{Predictor informativeness moderates Mondrian failure}
\label{sec:res_mechanism}

Cross-predictor results in Table~\ref{tab:paired_all} (Appendix~\ref{app:paired}) identify predictor informativeness as a key moderator. For the three informative predictors, Mondrian increases both subgroup gap and set size: XGBoost ($\Delta_{\mathrm{gap}}{=}{+}0.013$, $\Delta_{\mathrm{size}}{=}{+}0.036$), \acs{mlp} ($+0.023$, $+0.039$), and ordered logistic ($+0.013$, $+0.043$); effect sizes on set size are large ($d_z{=}1.66$, $1.64$, $2.06$). The prior baseline inverts this pattern, with Mondrian reducing both gap ($-0.029$) and set size ($-0.017$). Informative predictors induce heterogeneous score distributions across subgroups, so thin-cell per-group thresholds overreact to local idiosyncrasies and amplify disparities; with the uninformative prior baseline, scores are nearly homogeneous and thresholds stay close to the global value. Mondrian safety therefore requires auditing both group-varying confidence patterns and calibration-cell sizes.

\section{Discussion}
\label{sec:discussion}

Our central finding is that Mondrian \acs{cp} can worsen the fairness-efficiency trade-off. For XGBoost, Mondrian increases both weighted subgroup gap ($+0.013$) and weighted set size ($+0.036$) relative to standard conformal; the same adverse direction appears for \acs{mlp}, ordered logistic, and the sensitivity branches. A key conceptual point is that Mondrian's guarantee target and our audit target are not identical. Mondrian controls within-group unweighted conditional coverage for predefined groups under exchangeability, whereas our deployment-relevant criterion is a survey-weighted max$-$min coverage gap across intersectional cells over repeated random splits. In thin, imbalanced cells, satisfying the former need not improve the latter. This mismatch helps explain why a theorem-backed group-specific calibration rule can still look worse under fairness metrics that matter in practice.

The failure is not universal: Section~\ref{sec:res_mechanism} shows that the prior baseline is a boundary case where nearly homogeneous score distributions let Mondrian reduce the weighted gap (0.138 to 0.109). But thin calibration cells still make group thresholds unstable: the thinnest cell (Black non-Hispanic $|$ College+, $n_\mathrm{cal}{=}45$) shows a set-size increase of $+0.442$ under Mondrian, versus $-0.033$ for the largest cell~\citep{lei2014distribution, foygel2021limits}.

Regularized Mondrian ($\lambda{=}50$) partially mitigates this instability by shrinking thin-cell thresholds toward the global value. For XGBoost, shrinkage reduces the maximum set-size perturbation ($+0.442 \rightarrow +0.159$) and yields the best weighted gap among XGBoost variants (0.133 vs.\ 0.135 standard, 0.147 Mondrian). Survey weighting also matters: weighted evaluation increases the XGBoost subgroup gap by about $+0.020$ (Appendix~\ref{app:weights}), so unweighted analyses understate disparity in complex surveys~\citep{lumley2010complex}. Mondrian should therefore not be assumed to improve subgroup fairness in thin-cell settings; the calibration guarantee, audited subgroup structure, and weighting scheme must be aligned in deployment, with per-group diagnostics and a shrinkage comparator as minimum safeguards~\citep{cresswell2024conformal, romano2020malice}.

\section{Conclusion}
\label{sec:conclusion}

We have shown that standard split \acs{cp} achieves nominal marginal coverage while masking subgroup disparities ({$\sim$}13~pp weighted gap across 12 race$\times$education cells). Mondrian \acs{cp} worsens both gap and efficiency for informative predictors when calibration cells are thin ($d_z{=}1.66$ on set size, $\Delta_{\mathrm{gap}} = +0.013$ for XGBoost), and a James-Stein-style regularized comparator only partly mitigates this instability. Failure analysis pins the mechanism on calibration-cell fragmentation interacting with group-specific confidence mismatch, a condition that is absent for uninformative predictors and thus helps explain why Mondrian helps the prior baseline but harms XGBoost, \acs{mlp}, and ordered logistic. Survey-weighted evaluation consistently reveals larger disparities than unweighted analysis ($+0.020$ for XGBoost), reinforcing that design weights belong in the evaluation pipeline even when they should not be injected naively into calibration. Together, these results argue that marginal coverage certificates are insufficient for equitable deployment, and that per-group diagnostics, cell-size checks, and a shrinkage comparator should be standard pre-deployment practice.

Results derive from a single U.S. survey wave, subgroup boundaries are fixed a priori, the regularized comparator lacks formal guarantees, and AI-attitude correlates are same-wave rather than strictly exogenous. Open directions include adaptive or data-driven subgroup selection, cross-wave validation under temporal distribution shift, design-aware conformal theory that formally incorporates survey weights into calibration, and extensions to regression or multi-label ordinal outcomes.


\bibliographystyle{plainnat}
\bibliography{references}

\appendix

\section{Coverage guarantee proofs}
\label{app:proofs}

We restate the finite-sample coverage guarantees invoked in Section~\ref{sec:conformal} for completeness. These results are standard; we include them to make the paper self-contained.

\subsection{Standard split conformal coverage}
\label{app:proof_standard}

\begin{proposition}[Marginal coverage, \citealt{vovk2005algorithmic}]
Let $\{(X_i, Y_i)\}_{i=1}^{n+1}$ be exchangeable random variables. Define nonconformity scores $S_i = s(X_i, Y_i)$ for $i \in I_\mathrm{cal}$ and the threshold $\hat{q}_\alpha$ as in Eq.~\eqref{eq:standard_threshold}. Then
\begin{equation}
  \Pr\!\bigl(Y_{n+1} \in \hat{C}_\alpha(X_{n+1})\bigr) \ge 1 - \alpha.
  \label{eq:proof_marginal}
\end{equation}
\end{proposition}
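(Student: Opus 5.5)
The plan is the standard rank (exchangeability) argument for split conformal prediction. Condition on the training set $\mathcal{D}_\mathrm{tr}$, so the score function $s(\cdot,\cdot)$ is fixed; if randomized scores as in Eq.~\eqref{eq:score} are used, include each $U_i$ with its data point, which keeps the tuples $(X_i,Y_i,U_i)$ exchangeable. Write $m = |I_\mathrm{cal}|$ and $S_{n+1} = s(X_{n+1},Y_{n+1})$. Because the calibration points and the test point are exchangeable and we apply the same fixed function to each, the $m+1$ scores $S_1,\dots,S_m,S_{n+1}$ are also exchangeable.

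The first step is to rewrite the coverage event. By definition of $\hat{C}_\alpha$, $Y_{n+1}\in\hat{C}_\alpha(X_{n+1})$ holds exactly when $S_{n+1}\le \hat{q}_\alpha$. The quantile in Eq.~\eqref{eq:standard_threshold} is taken over the multiset $\{S_i\}\cup\{\infty\}$. It equals the $k$-th smallest calibration score with $k=\lceil (1-\alpha)(m+1)\rceil$, and equals $\infty$ when $k>m$, in which case coverage is trivial. Next I would check the equivalence $S_{n+1}\le \hat{q}_\alpha \iff S_{n+1}\le S_{(k)}(S_1,\dots,S_m,S_{n+1})$, where the right-hand side is the $k$-th order statistic of all $m+1$ scores. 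This equivalence makes the event a statement about the rank of $S_{n+1}$ among $m+1$ exchangeable values, which is the point of the $\{\infty\}$ augmentation.

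The second step is the counting argument. By exchangeability, each index in $\{1,\dots,m,n+1\}$ is equally likely to sit in any given rank position. With no ties, the rank of $S_{n+1}$ is uniform on $\{1,\dots,m+1\}$, so $\Pr(S_{n+1}\le S_{(k)})=k/(m+1)\ge 1-\alpha$. With ties, the cleanest route is: at least $k$ of the $m+1$ scores satisfy $S_j\le S_{(k)}$, and summing over $j$ gives $\sum_j \Pr(S_j\le S_{(k)}) \ge k$. Each term has the same probability by exchangeability, because $S_{(k)}$ is a symmetric function of the scores. Hence $\Pr(S_{n+1}\le S_{(k)})\ge k/(m+1)\ge 1-\alpha$. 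Taking expectation over $\mathcal{D}_\mathrm{tr}$ then yields Eq.~\eqref{eq:proof_marginal}.

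The main obstacle is the equivalence in the first step together with the tie handling. I would argue it directly. If $S_{n+1}\le S_{(k)}$ of the calibration set, then adding $S_{n+1}$ does not move the $k$-th order statistic above the calibration $k$-th value. Conversely, if $S_{n+1}$ exceeds the calibration $k$-th value, it is not among the $k$ smallest of the augmented set. Ties are handled by the counting argument above rather than by assuming continuity, though the uniform $U$ in Eq.~\eqref{eq:score} makes ties occur with probability zero anyway.
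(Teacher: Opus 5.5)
Your proposal is correct and uses the same rank-based exchangeability argument as the paper's proof: coverage is the event $S_{n+1}\le\hat{q}_\alpha$, and $\hat{q}_\alpha$ is the $\lceil(1-\alpha)(|I_\mathrm{cal}|+1)\rceil$-th smallest of the augmented scores, so the rank of the test score gives the bound. Your version is more careful than the paper's, which asserts a uniform rank and states an equality that holds only without ties, whereas your counting argument gives the inequality in general and explicitly justifies replacing the calibration order statistic with the order statistic of all $|I_\mathrm{cal}|+1$ scores.
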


\begin{proof}
Since the calibration scores $\{S_i\}_{i \in I_\mathrm{cal}}$ and the test score $S_{n+1}$ are exchangeable, the rank of $S_{n+1}$ among $\{S_1, \dots, S_{|I_\mathrm{cal}|}, S_{n+1}\}$ is uniformly distributed over $\{1, \dots, |I_\mathrm{cal}| + 1\}$. The threshold $\hat{q}_\alpha = \mathrm{Quantile}_{1-\alpha}(\{S_i\} \cup \{\infty\})$ is the $\lceil (1-\alpha)(|I_\mathrm{cal}|+1) \rceil$-th smallest value. Therefore,
\begin{equation}
  \Pr(S_{n+1} \le \hat{q}_\alpha) = \frac{\lceil (1-\alpha)(|I_\mathrm{cal}|+1) \rceil}{|I_\mathrm{cal}|+1} \ge 1 - \alpha.
\end{equation}
Since $Y_{n+1} \in \hat{C}_\alpha(X_{n+1})$ if and only if $S_{n+1} \le \hat{q}_\alpha$, the result follows.
\end{proof}

\subsection{Mondrian conditional coverage}
\label{app:proof_mondrian}

\begin{proposition}[Group-conditional coverage, \citealt{vovk2005algorithmic, vovk2012conditional}]
Under within-group exchangeability of calibration and test data, the Mondrian prediction set with threshold $\hat{q}_{g,\alpha}$ as in Eq.~\eqref{eq:mondrian_threshold} satisfies
\begin{equation}
  \Pr\!\bigl(Y_{n+1} \in \hat{C}^\mathrm{Mond}_\alpha(X_{n+1}) \mid G_{n+1} = g\bigr) \ge 1 - \alpha \quad \text{for each } g \in \{1, \dots, G\}.
  \label{eq:proof_conditional}
\end{equation}
\end{proposition}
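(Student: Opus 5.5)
The plan is to reduce the group-conditional statement to the marginal coverage result (the preceding Proposition), applied separately inside each group. Fix a group $g$ and condition on the event $G_{n+1}=g$, and also on the full vector of calibration group labels $(G_i)_{i\in I_\mathrm{cal}}$. Conditioning on the labels makes $n_g=|I_\mathrm{cal}^{(g)}|$ deterministic. We interpret ``within-group exchangeability'' as the requirement that, conditional on these labels, the points $\{(X_i,Y_i,U_i): i\in I_\mathrm{cal}^{(g)}\}\cup\{(X_{n+1},Y_{n+1},U_{n+1})\}$ are exchangeable. This holds, for instance, when all $n+1$ points are i.i.d.\ or jointly exchangeable, since permutations preserving group labels then preserve the joint law. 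The training set $\mathcal{D}_\mathrm{tr}$ is held fixed, so the score function $s(\cdot,\cdot;\cdot)$ is a deterministic map and the scores $S_i$ for $i\in I_\mathrm{cal}^{(g)}$, together with $S_{n+1}$, inherit exchangeability. The independent tie-breaking draws $U_i$ are included in the data points, so this step is unaffected by them.

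Next I run the rank argument from the proof of the marginal coverage Proposition with $|I_\mathrm{cal}|$ replaced by $n_g$. The threshold $\hat q_{g,\alpha}$ is the $k_g=\lceil(1-\alpha)(n_g+1)\rceil$-th smallest element of $\{S_i\}_{i\in I_\mathrm{cal}^{(g)}}\cup\{\infty\}$, and it equals $\infty$ when $k_g>n_g$. By exchangeability the rank of $S_{n+1}$ among the $n_g+1$ scores is uniform, with ties broken uniformly at random. The randomized score makes ties occur with probability zero, but a careful version should still say ``at most'' ranks rather than assume tie-freeness. The event $S_{n+1}\le\hat q_{g,\alpha}$ contains the event that $S_{n+1}$ has rank at most $k_g$. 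That event has probability $k_g/(n_g+1)\ge 1-\alpha$.

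Because the prediction rule for a test point in group $g$ uses exactly $\hat q_{g,\alpha}$, we have $Y_{n+1}\in\hat C^\mathrm{Mond}_\alpha(X_{n+1})$ if and only if $S_{n+1}\le \hat q_{g,\alpha}$. This gives the bound conditional on $G_{n+1}=g$ and on the calibration labels. Averaging over the calibration labels, by the tower property, yields Eq.~\eqref{eq:proof_conditional}.

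The main obstacle is the conditioning step rather than the rank computation. We must state precisely what within-group exchangeability means, and check that conditioning on $G_{n+1}=g$ and on the random group sizes does not destroy exchangeability among the group-$g$ points. We must also handle the degenerate case $n_g=0$, where $\hat q_{g,\alpha}=\infty$ and coverage is trivially one. The case $k_g>n_g$, where the threshold is infinite and coverage is again trivially one, needs the same treatment.
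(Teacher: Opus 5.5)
Your proposal is correct and follows the paper's route. Like the paper, you condition on $G_{n+1}=g$, use exchangeability of the group-$g$ calibration scores together with $S_{n+1}$, and rerun the rank argument of Proposition~1 with $n_g$ in place of $|I_\mathrm{cal}|$. Your extra steps make rigorous what the paper leaves implicit: conditioning on the calibration labels so that $n_g$ is fixed before averaging by the tower property, and using the containment $\{\text{rank}\le k_g\}\subseteq\{S_{n+1}\le\hat q_{g,\alpha}\}$ rather than the paper's asserted equality.
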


\begin{proof}
Conditioning on $G_{n+1} = g$, the scores $\{S_i : i \in I_\mathrm{cal}^{(g)}\} \cup \{S_{n+1}\}$ form an exchangeable set of size $n_g + 1$. By the same rank argument as Proposition~1 applied within group $g$:
\begin{equation}
  \Pr(S_{n+1} \le \hat{q}_{g,\alpha} \mid G_{n+1} = g) = \frac{\lceil (1-\alpha)(n_g+1) \rceil}{n_g+1} \ge 1 - \alpha.
\end{equation}
The key observation for our study is that the finite-sample correction $1/(n_g+1)$ grows with $1/n_g$. For the thinnest cell in our data ($n_g = 32$), the effective coverage floor is $1 - \alpha + 1/(n_g+1) \approx 0.930$, whereas for the largest cell ($n_g = 355$) it is $\approx 0.903$. This differential tightness contributes to the threshold variance documented in Section~\ref{sec:res_pergroup}.
\end{proof}

\section{Regularized Mondrian derivation}
\label{app:shrinkage}

The regularized Mondrian threshold (Eq.~\eqref{eq:reg_threshold}) shrinks the group-specific quantile toward the global quantile:
\begin{equation}
  \hat{q}_{g,\alpha}^\mathrm{reg} = w_g \cdot \hat{q}_{g,\alpha} + (1 - w_g) \cdot \hat{q}_\alpha, \qquad w_g = \frac{n_g}{n_g + \lambda}.
\end{equation}

This construction follows the James-Stein shrinkage principle~\citep{james1961estimation, efron1975data}. We briefly derive the motivation and state properties.

\textbf{MSE motivation.} Consider the group-specific threshold $\hat{q}_{g,\alpha}$ as a noisy estimate of the oracle group quantile $q_g^*$. Assume $\mathrm{Var}(\hat{q}_{g,\alpha}) \approx \sigma^2 / n_g$ for some constant $\sigma^2$ (the quantile variance scales inversely with sample size by the Bahadur representation). The global threshold $\hat{q}_\alpha$ has variance $\sigma^2 / n_\mathrm{cal}$. A convex combination $\hat{q}^\mathrm{reg} = w \hat{q}_g + (1-w) \hat{q}_\alpha$ has MSE:
\begin{equation}
  \mathrm{MSE}(w) = w^2 \cdot \frac{\sigma^2}{n_g} + (1-w)^2 \cdot \left(\frac{\sigma^2}{n_\mathrm{cal}} + b_g^2\right),
  \label{eq:mse_shrinkage}
\end{equation}
where $b_g = q_g^* - q^*$ is the bias from using the global quantile. Minimizing over $w$ yields:
\begin{equation}
  w^* = \frac{\sigma^2/n_\mathrm{cal} + b_g^2}{\sigma^2/n_g + \sigma^2/n_\mathrm{cal} + b_g^2}.
\end{equation}
When $b_g \approx 0$ (group quantile close to global) and $n_\mathrm{cal} \gg n_g$, this simplifies to $w^* \approx n_g / (n_g + \text{const})$. Setting $\lambda = 50$ yields a conservative plug-in that matches the scale of our thinnest cells ($n_g = 32\text{--}45$).

\textbf{Shrinkage weights.} Table~\ref{tab:shrinkage} reports the shrinkage weights for all 12 groups under $\lambda = 50$.

\begin{table}[h]
  \caption{\textbf{Shrinkage weights.} $w_g = n_g / (n_g + 50)$ for $\lambda = 50$. Groups sorted by calibration-cell size.}
  \label{tab:shrinkage}
  \centering
  \small
  \begin{tabular}{lrr}
    \toprule
    Group & $n_\mathrm{cal}$ & $w_g$ \\
    \midrule
    Asian/Other $|$ College+     & 32  & 0.39 \\
    Asian/Other $|$ Some college & 43  & 0.46 \\
    Black non-Hisp.\ $|$ College+ & 45  & 0.47 \\
    Black non-Hisp.\ $|$ Some college & 53  & 0.51 \\
    Black non-Hisp.\ $|$ HS or less & 59  & 0.54 \\
    Hispanic $|$ College+        & 63  & 0.56 \\
    Hispanic $|$ HS or less      & 70  & 0.58 \\
    Hispanic $|$ Some college    & 80  & 0.62 \\
    Asian/Other $|$ HS or less   & 134 & 0.73 \\
    White non-Hisp.\ $|$ College+ & 199 & 0.80 \\
    White non-Hisp.\ $|$ Some college & 245 & 0.83 \\
    White non-Hisp.\ $|$ HS or less & 355 & 0.88 \\
    \bottomrule
  \end{tabular}
\end{table}

\textbf{Coverage note.} The regularized threshold does not carry a formal finite-sample coverage guarantee because the convex combination breaks the rank-based argument. Empirically, marginal coverage remains within 1~pp of the nominal target across all 100 splits (Table~\ref{tab:main}), suggesting that shrinkage does not meaningfully degrade aggregate validity.

\section{Base predictor details}
\label{app:base}

Table~\ref{tab:base_metrics} reports test-set performance for all four base predictors across 100 splits. XGBoost achieves the highest accuracy and lowest log-loss, confirming its role as the strongest practical predictor.

\begin{table}[h]
  \caption{\textbf{Base predictor test-set performance.} 100-split means $\pm$ SD. Accuracy and log-loss are unweighted; weighted variants show similar rankings.}
  \label{tab:base_metrics}
  \centering
  \small
  \begin{tabular}{lcccc}
    \toprule
    Model & Features & Accuracy & Wtd.\ Accuracy & Log-loss \\
    \midrule
    XGBoost          & 66 & $0.525 \pm 0.015$ & $0.519 \pm 0.016$ & $1.098 \pm 0.023$ \\
    \acs{mlp}        & 66 & $0.510 \pm 0.019$ & $0.505 \pm 0.019$ & $1.117 \pm 0.041$ \\
    Ordered logistic & 14 & $0.446 \pm 0.014$ & $0.438 \pm 0.015$ & $1.244 \pm 0.018$ \\
    Prior baseline   &  0 & $0.393 \pm 0.000$ & $0.393 \pm 0.000$ & $1.456 \pm 0.000$ \\
    \bottomrule
  \end{tabular}
\end{table}

\textbf{Model specifications.} The \textbf{prior baseline} assigns the training-set marginal class distribution to all test points ($\hat{p}_c = n_c^\mathrm{tr} / |I_\mathrm{tr}|$), providing a lower bound on predictor quality. \textbf{Ordered logistic} uses a proportional-odds model with 14 features: age, gender, education, race/ethnicity, income, party ID, ideology, religious attendance, region, urbanicity, internet use, smartphone ownership, social media use, and AI familiarity. \textbf{XGBoost} uses 100 boosting rounds, max depth 4, learning rate 0.1, and $\ell_2$ regularization $\lambda = 1.0$ with softmax objective. The \textbf{\acs{mlp}} has two hidden layers (128, 64 units), ReLU activations, dropout 0.3, and is trained for up to 100 epochs with early stopping (patience 10) using Adam ($\mathrm{lr}{=}10^{-3}$). All models converge across all 100 splits.

\section{Full experimental results}
\label{app:full}

\subsection{All model-method combinations}
\label{app:allresults}

Table~\ref{tab:full_results} extends Table~\ref{tab:main} to all 16 model$\times$method combinations evaluated in the primary branch (100-split survey-weighted means).

\begin{table}[h]
  \caption{\textbf{Full results for all model-method combinations.} 100-split survey-weighted means ($\alpha = 0.10$).}
  \label{tab:full_results}
  \centering
  \small
  \begin{tabular}{llccc}
    \toprule
    Model & Method & Wtd.\ Coverage & Wtd.\ Set Size & Wtd.\ Gap \\
    \midrule
    \multirow{4}{*}{XGBoost}
      & Standard     & 0.905 & 3.240 & 0.135 \\
      & Reg.\ Mond.\ & 0.906 & 3.251 & 0.133 \\
      & Mondrian     & 0.908 & 3.275 & 0.147 \\
      & Wtd.\ Mond.\ & 0.898 & 3.223 & 0.177 \\
    \midrule
    \multirow{4}{*}{\acs{mlp}}
      & Standard     & 0.904 & 3.279 & 0.130 \\
      & Reg.\ Mond.\ & 0.906 & 3.296 & 0.137 \\
      & Mondrian     & 0.908 & 3.318 & 0.153 \\
      & Wtd.\ Mond.\ & 0.898 & 3.264 & 0.178 \\
    \midrule
    \multirow{4}{*}{Ordered logistic}
      & Standard     & 0.900 & 3.346 & 0.152 \\
      & Reg.\ Mond.\ & 0.903 & 3.363 & 0.140 \\
      & Mondrian     & 0.906 & 3.390 & 0.165 \\
      & Wtd.\ Mond.\ & 0.897 & 3.348 & 0.198 \\
    \midrule
    \multirow{4}{*}{Prior baseline}
      & Standard     & 0.910 & 3.740 & 0.138 \\
      & Reg.\ Mond.\ & 0.908 & 3.703 & 0.105 \\
      & Mondrian     & 0.911 & 3.723 & 0.109 \\
      & Wtd.\ Mond.\ & 0.899 & 3.649 & 0.145 \\
    \bottomrule
  \end{tabular}
\end{table}

\subsection{Survey-weight shift analysis}
\label{app:weights}

Figure~\ref{fig:weight_shift} shows the distribution of weighted$-$unweighted metric differences across 100 splits for XGBoost. Survey weighting consistently increases the subgroup gap (mean shift: $+0.020$ for standard, $+0.016$ for regularized Mondrian, $+0.010$ for Mondrian) and slightly decreases set sizes ($-0.025$ for standard). Coverage shifts are negligible ($<0.003$). Table~\ref{tab:weight_shift} reports means and standard deviations for XGBoost methods.

\begin{figure}[h]
  \centering
  \includegraphics[width=\textwidth]{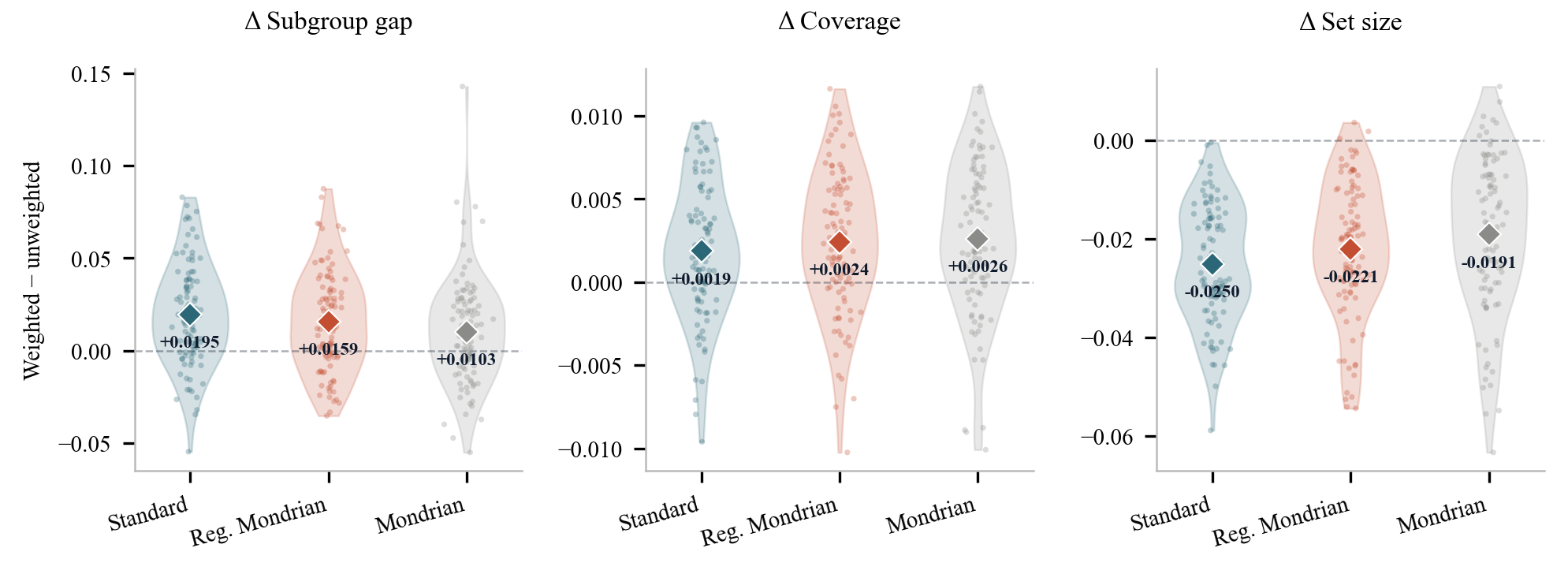}
  \caption{\textbf{Survey-weighting shifts for XGBoost.} Weighted$-$unweighted differences across 100 splits. \textbf{Left}: subgroup gap increases under weighting. \textbf{Center}: coverage is negligibly affected. \textbf{Right}: weighted set sizes are slightly smaller.}
  \label{fig:weight_shift}
\end{figure}

\begin{table}[h]
  \caption{\textbf{Survey-weight shift for XGBoost.} Weighted $-$ unweighted (100-split means $\pm$ SD).}
  \label{tab:weight_shift}
  \centering
  \small
  \begin{tabular}{lccc}
    \toprule
    Method & $\Delta$ Coverage & $\Delta$ Gap & $\Delta$ Set Size \\
    \midrule
    Standard       & $+0.002 \pm 0.004$ & $+0.020 \pm 0.027$ & $-0.025 \pm 0.012$ \\
    Reg.\ Mond.\   & $+0.002 \pm 0.004$ & $+0.016 \pm 0.027$ & $-0.022 \pm 0.014$ \\
    Mondrian       & $+0.003 \pm 0.004$ & $+0.010 \pm 0.029$ & $-0.019 \pm 0.016$ \\
    \bottomrule
  \end{tabular}
\end{table}

\subsection{Full per-group coverage}
\label{app:pergroup}

Table~\ref{tab:pergroup} reports XGBoost per-group weighted coverage for all 12 race$\times$education cells under the three conformal methods (100-split means).

\begin{table}[h]
  \caption{\textbf{XGBoost per-group weighted coverage.} 100-split means. Groups sorted by standard coverage.}
  \label{tab:pergroup}
  \centering
  \small
  \begin{tabular}{lcccr}
    \toprule
    Group & Standard & Reg.\ Mond.\ & Mondrian & $n$ \\
    \midrule
    Black non-Hisp.\ $|$ College+ & 0.867 & 0.895 & 0.925 & 151 \\
    Hispanic $|$ College+ & 0.887 & 0.904 & 0.917 & 209 \\
    Hispanic $|$ HS or less & 0.888 & 0.899 & 0.908 & 232 \\
    Black non-Hisp.\ $|$ HS or less & 0.893 & 0.902 & 0.911 & 198 \\
    White non-Hisp.\ $|$ College+ & 0.901 & 0.905 & 0.906 & 664 \\
    Asian/Other $|$ HS or less & 0.902 & 0.901 & 0.901 & 445 \\
    Hispanic $|$ Some college & 0.909 & 0.903 & 0.897 & 266 \\
    White non-Hisp.\ $|$ HS or less & 0.911 & 0.904 & 0.903 & 1184 \\
    White non-Hisp.\ $|$ Some college & 0.912 & 0.908 & 0.908 & 818 \\
    Asian/Other $|$ College+ & 0.921 & 0.914 & 0.899 & 105 \\
    Black non-Hisp.\ $|$ Some college & 0.922 & 0.926 & 0.926 & 175 \\
    Asian/Other $|$ Some college & 0.923 & 0.929 & 0.935 & 144 \\
    \bottomrule
  \end{tabular}
\end{table}

\subsection{Paired method comparisons}
\label{app:paired}

Table~\ref{tab:paired} reports paired deltas (method $-$ standard) for XGBoost across 100 splits with 95\% confidence intervals and Cohen's $d_z$.

\begin{table}[h]
  \caption{\textbf{Paired deltas for XGBoost.} Method $-$ standard, 100 splits. CI = mean $\pm$ 1.96$\cdot$SE.}
  \label{tab:paired}
  \centering
  \small
  \begin{tabular}{lcccc}
    \toprule
    Comparison & $\Delta$ Wtd.\ Gap [95\% CI] & $d_z$ (Gap) & $\Delta$ Wtd.\ Size [95\% CI] & $d_z$ (Size) \\
    \midrule
    Mondrian $-$ Std.\ & ${+}0.013\;[{+}0.004, {+}0.021]$ & 0.30 & ${+}0.036\;[{+}0.032, {+}0.040]$ & 1.66 \\
    Reg.\ Mond.\ $-$ Std.\ & ${-}0.001\;[{-}0.007, {+}0.004]$ & 0.04 & ${+}0.012\;[{+}0.009, {+}0.015]$ & 0.75 \\
    \bottomrule
  \end{tabular}
\end{table}

Table~\ref{tab:paired_all} extends the paired comparison to all four base predictors.

\begin{table}[h]
  \caption{\textbf{Paired deltas across all predictors.} Method $-$ standard, 100 splits. Coverage, gap, and set-size deltas with Cohen's $d_z$.}
  \label{tab:paired_all}
  \centering
  \small
  \begin{tabular}{llcccc}
    \toprule
    Model & Comparison & $\Delta$ Cov.\ & $\Delta$ Gap & $\Delta$ Size & $d_z$ (Size) \\
    \midrule
    \multirow{2}{*}{XGBoost}
      & Mondrian     & ${+}0.004$ & ${+}0.013$ & ${+}0.036$ & 1.66 \\
      & Reg.\ Mond.\ & ${+}0.001$ & ${-}0.001$ & ${+}0.012$ & 0.75 \\
    \midrule
    \multirow{2}{*}{\acs{mlp}}
      & Mondrian     & ${+}0.004$ & ${+}0.023$ & ${+}0.039$ & 1.64 \\
      & Reg.\ Mond.\ & ${+}0.002$ & ${+}0.006$ & ${+}0.017$ & 0.92 \\
    \midrule
    \multirow{2}{*}{Ord.\ logistic}
      & Mondrian     & ${+}0.006$ & ${+}0.013$ & ${+}0.043$ & 2.06 \\
      & Reg.\ Mond.\ & ${+}0.004$ & ${-}0.012$ & ${+}0.017$ & 1.15 \\
    \midrule
    \multirow{2}{*}{Prior baseline}
      & Mondrian     & ${+}0.001$ & ${-}0.029$ & ${-}0.017$ & $-$0.50 \\
      & Reg.\ Mond.\ & ${-}0.002$ & ${-}0.034$ & ${-}0.038$ & $-$1.45 \\
    \bottomrule
  \end{tabular}
\end{table}

\section{Failure analysis details}
\label{app:failure}

\subsection{Group-specific confidence mismatch}
\label{app:mismatch}

Table~\ref{tab:mismatch} reports the per-group overconfidence metric for XGBoost, defined as $\text{overconfidence}_g = \bar{p}_{\max,g} - \text{accuracy}_g$, where $\bar{p}_{\max,g}$ is the mean predicted probability for the most-likely class. Positive values indicate the model is more confident than accurate. The variation across groups (range: 0.017 to 0.116) drives Mondrian threshold heterogeneity.

\begin{table}[h]
  \caption{\textbf{XGBoost confidence mismatch by group.} 100-split weighted means. Overconfidence $= \bar{p}_{\max} - \text{accuracy}$. Groups sorted by overconfidence.}
  \label{tab:mismatch}
  \centering
  \small
  \begin{tabular}{lccc}
    \toprule
    Group & Wtd.\ Accuracy & Wtd.\ Confidence & Overconfidence \\
    \midrule
    Asian/Other $|$ Some college          & 0.544 & 0.561 & 0.017 \\
    White non-Hisp.\ $|$ Some college     & 0.548 & 0.582 & 0.034 \\
    White non-Hisp.\ $|$ HS or less       & 0.535 & 0.570 & 0.035 \\
    Asian/Other $|$ College+              & 0.523 & 0.566 & 0.043 \\
    Hispanic $|$ HS or less               & 0.517 & 0.565 & 0.049 \\
    Black non-Hisp.\ $|$ College+         & 0.515 & 0.569 & 0.053 \\
    White non-Hisp.\ $|$ College+         & 0.502 & 0.564 & 0.062 \\
    Asian/Other $|$ HS or less            & 0.490 & 0.561 & 0.071 \\
    Black non-Hisp.\ $|$ Some college     & 0.482 & 0.560 & 0.078 \\
    Black non-Hisp.\ $|$ HS or less       & 0.472 & 0.565 & 0.093 \\
    Hispanic $|$ Some college             & 0.536 & 0.562 & 0.026 \\
    Hispanic $|$ College+                 & 0.443 & 0.559 & 0.116 \\
    \bottomrule
  \end{tabular}
\end{table}

The group with the largest overconfidence (Hispanic $|$ College+, 0.116) is also among the three most under-covered under standard conformal (coverage 0.887, Table~\ref{tab:pergroup}), confirming that confidence mismatch translates into coverage shortfalls.

\subsection{Failure extrema concentration}
\label{app:extrema}

Table~\ref{tab:extrema} reports how often each group is the single worst-affected cell under Mondrian for XGBoost across 100 splits, for two failure modes: maximum set-size inflation and maximum coverage-error worsening.

\begin{table}[h]
  \caption{\textbf{Failure extrema concentration under XGBoost Mondrian.} Frequency (out of 100 splits) that each group is the single worst-affected cell. Top 5 groups shown per failure mode.}
  \label{tab:extrema}
  \centering
  \small
  \begin{tabular}{lclc}
    \toprule
    \multicolumn{2}{c}{Max set-size inflation} & \multicolumn{2}{c}{Max coverage-error worsening} \\
    \cmidrule(lr){1-2} \cmidrule(lr){3-4}
    Group & Freq.\ & Group & Freq.\ \\
    \midrule
    Black non-Hisp.\ $|$ College+   & 66 & Black non-Hisp.\ $|$ College+   & 18 \\
    Hispanic $|$ College+            & 14 & Asian/Other $|$ College+         & 12 \\
    Asian/Other $|$ Some college     &  6 & Black non-Hisp.\ $|$ HS or less & 11 \\
    Black non-Hisp.\ $|$ HS or less  &  5 & Black non-Hisp.\ $|$ Some college & 11 \\
    Hispanic $|$ HS or less          &  5 & Hispanic $|$ College+            & 11 \\
    \bottomrule
  \end{tabular}
\end{table}

Black non-Hispanic $|$ College+ ($n_\mathrm{cal}{=}45$) is the worst set-size cell in 66 out of 100 splits, confirming that failure concentrates in the thinnest minority cells rather than distributing randomly.

\subsection{Cell-size correlation with Mondrian perturbations}
\label{app:correlation}

Table~\ref{tab:corr} reports Pearson and Spearman correlations between calibration-cell size ($n_g$) and Mondrian-induced perturbations across all 12 groups $\times$ 100 splits ($n{=}1{,}200$ observations per model).

\begin{table}[h]
  \caption{\textbf{Cell-size correlation with Mondrian perturbation.} 1,200 group$\times$split observations. All $p < 0.001$ unless noted.}
  \label{tab:corr}
  \centering
  \small
  \begin{tabular}{llcc}
    \toprule
    Model & Metric & Pearson $r$ & Spearman $\rho$ \\
    \midrule
    \multirow{2}{*}{XGBoost}
      & $\Delta$ Set size   & $-0.247$ & $-0.225$ \\
      & $\Delta$ Coverage   & $-0.174$ & $-0.195$ \\
    \midrule
    \multirow{2}{*}{\acs{mlp}}
      & $\Delta$ Set size   & $-0.194$ & $-0.184$ \\
      & $\Delta$ Coverage   & $-0.133$ & $-0.158$ \\
    \midrule
    \multirow{2}{*}{Ord.\ logistic}
      & $\Delta$ Set size   & $-0.221$ & $-0.188$ \\
      & $\Delta$ Coverage   & $-0.152$ & $-0.175$ \\
    \midrule
    \multirow{2}{*}{Prior baseline}
      & $\Delta$ Set size   & $-0.377$ & $-0.310$ \\
      & $\Delta$ Coverage   & $-0.309$ & $-0.296$ \\
    \bottomrule
  \end{tabular}
\end{table}

Negative correlations confirm that smaller cells experience larger Mondrian perturbations across all four predictors. The effect is strongest for the prior baseline ($r = -0.377$ for set size) because its uniform score distribution makes threshold estimation maximally sensitive to small-sample noise.

\section{Sensitivity and extensions}
\label{app:sensitivity}

\subsection{Sensitivity branch comparison}
\label{app:branches}

Table~\ref{tab:sensitivity} reports the full sensitivity comparison across three outcome operationalizations for XGBoost.

\begin{table}[h]
  \caption{\textbf{Sensitivity analysis.} XGBoost results across outcome operationalizations (100-split weighted means).}
  \label{tab:sensitivity}
  \centering
  \small
  \begin{tabular}{llccc}
    \toprule
    Branch & Method & Wtd.\ Cov.\ & Wtd.\ Size & Wtd.\ Gap \\
    \midrule
    \multirow{3}{*}{\shortstack[l]{Primary\\($n{=}4{,}591$, 5-level)}}
      & Standard     & 0.905 & 3.240 & 0.135 \\
      & Reg.\ Mond.\ & 0.906 & 3.251 & 0.133 \\
      & Mondrian     & 0.908 & 3.275 & 0.147 \\
    \midrule
    \multirow{3}{*}{\shortstack[l]{Not sure merged\\($n{=}5{,}348$, 5-level)}}
      & Standard     & 0.901 & 3.204 & 0.124 \\
      & Reg.\ Mond.\ & 0.904 & 3.219 & 0.128 \\
      & Mondrian     & 0.906 & 3.235 & 0.143 \\
    \midrule
    \multirow{3}{*}{\shortstack[l]{Binary trust\\($n{=}4{,}071$, 2-level)}}
      & Standard     & 0.902 & 1.855 & 0.156 \\
      & Reg.\ Mond.\ & 0.903 & 1.858 & 0.166 \\
      & Mondrian     & 0.905 & 1.860 & 0.187 \\
    \bottomrule
  \end{tabular}
\end{table}

The Mondrian gap penalty ($\Delta_\mathrm{gap} = \text{Mondrian} - \text{Standard}$) is $+0.012$ for the primary branch, $+0.019$ for not-sure-merged, and $+0.031$ for binary trust. The larger penalty in the binary branch is consistent with a coarser outcome scale producing sharper score-distribution discontinuities.

\subsection{Alternate subgroup families}
\label{app:altgroups}

Table~\ref{tab:altgroups} reports results when subgroups are defined by race alone (4 groups) or education alone (3 groups), rather than the full 12-cell intersection.

\begin{table}[h]
  \caption{\textbf{Alternate subgroup families.} XGBoost standard conformal (100-split weighted means $\pm$ SD).}
  \label{tab:altgroups}
  \centering
  \small
  \begin{tabular}{lccc}
    \toprule
    Group family & Wtd.\ Gap & Min Wtd.\ Cov.\ & Max Wtd.\ Cov.\ \\
    \midrule
    Race $\times$ Education (12 groups) & $0.135 \pm 0.035$ & $0.875 \pm 0.021$ & $0.925 \pm 0.015$ \\
    Race only (4 groups)                & $0.050 \pm 0.021$ & $0.880 \pm 0.018$ & $0.925 \pm 0.015$ \\
    Education only (3 groups)           & $0.027 \pm 0.015$ & $0.891 \pm 0.014$ & $0.918 \pm 0.012$ \\
    \bottomrule
  \end{tabular}
\end{table}

As expected, coarser groupings reduce the measured gap because they pool heterogeneous subgroups. The Mondrian penalty persists at all granularities: for race-only groups, Mondrian gap is $0.056 \pm 0.024$ vs.\ standard $0.050 \pm 0.021$; for education-only, Mondrian gap is $0.032 \pm 0.016$ vs.\ standard $0.027 \pm 0.015$. The qualitative conclusion is unchanged.

\section{Verification and reproducibility}
\label{app:verification}

We implement four verification checks to ensure experimental integrity.

\textbf{Split integrity.} Table~\ref{tab:splits} confirms that all 100 splits maintain exact partition sizes and respondent disjointness. Each respondent appears in exactly one partition per split (maximum assignments per respondent within any split $= 1$).

\begin{table}[h]
  \caption{\textbf{Split integrity verification.} All branches confirmed disjoint with exact partition sizes.}
  \label{tab:splits}
  \centering
  \small
  \begin{tabular}{lrrrrr}
    \toprule
    Branch & $n$ & Splits & $|I_\mathrm{tr}|$ & $|I_\mathrm{cal}|$ & $|I_\mathrm{te}|$ \\
    \midrule
    Primary          & 4{,}591 & 100 & 1{,}834 & 1{,}383 & 1{,}374 \\
    Not sure merged  & 5{,}348 & 100 & 2{,}138 & 1{,}612 & 1{,}598 \\
    Binary trust     & 4{,}071 & 100 & 1{,}629 & 1{,}224 & 1{,}218 \\
    \bottomrule
  \end{tabular}
\end{table}

\textbf{Coverage verification.} Table~\ref{tab:cov_verify} confirms that standard conformal achieves marginal coverage close to the 90\% target for all reported model$\times$branch combinations.

\begin{table}[h]
  \caption{\textbf{Coverage verification.} Marginal and weighted coverage for standard conformal (100-split means). All reported values lie within 0.011 of 0.90.}
  \label{tab:cov_verify}
  \centering
  \small
  \begin{tabular}{llcc}
    \toprule
    Branch & Model & Marginal Cov.\ & Wtd.\ Cov.\ \\
    \midrule
    \multirow{4}{*}{Primary}
      & XGBoost          & 0.903 & 0.905 \\
      & \acs{mlp}        & 0.902 & 0.904 \\
      & Ordered logistic & 0.901 & 0.900 \\
      & Prior baseline   & 0.902 & 0.910 \\
    \midrule
    \multirow{3}{*}{Not sure merged}
      & XGBoost          & 0.900 & 0.901 \\
      & \acs{mlp}        & 0.900 & 0.900 \\
      & Prior baseline   & 0.902 & 0.911 \\
    \midrule
    \multirow{3}{*}{Binary trust}
      & XGBoost          & 0.900 & 0.902 \\
      & \acs{mlp}        & 0.901 & 0.902 \\
      & Prior baseline   & 0.901 & 0.899 \\
    \bottomrule
  \end{tabular}
\end{table}

\textbf{Mondrian cell-size verification.} The minimum calibration-cell size across all 100 splits is 32 (primary), 36 (not-sure), and 25 (binary trust), all exceeding the minimum threshold of 20 observations required for reliable quantile estimation.

\textbf{Deterministic reproducibility.} Full re-execution of the pipeline with fixed random seeds produces byte-identical output files (verified via SHA-256 hashes of all result artifacts across all three branches).


\newpage
\section*{NeurIPS Paper Checklist}

\begin{enumerate}

\item {\bf Claims}
    \item[] Question: Do the main claims made in the abstract and introduction accurately reflect the paper's contributions and scope?
    \item[] Answer: \answerYes{}
    \item[] Justification: The abstract and introduction clearly state the negative result (Mondrian worsens fairness-efficiency for XGBoost), the regularized Mondrian comparator, and the failure analysis. All claims are supported by the 100-split experimental results in Section~4.

\item {\bf Limitations}
    \item[] Question: Does the paper discuss the limitations of the work performed by the authors?
    \item[] Answer: \answerYes{}
    \item[] Justification: Section~6 states five explicit limitations: cross-sectional design, predefined groups, no new theorem, same-wave predictors, and U.S.-centric scope.

\item {\bf Theory assumptions and proofs}
    \item[] Question: For each theoretical result, does the paper provide the full set of assumptions and a complete (and correct) proof?
    \item[] Answer: \answerYes{}
    \item[] Justification: Appendix~A restates two standard conformal prediction guarantees (marginal coverage and group-conditional coverage) as Propositions~1--2 with complete proofs. Both proofs state the exchangeability assumption explicitly. The regularized Mondrian comparator is presented as an empirical tool with no formal finite-sample guarantee, and that limitation is stated in the main text.

\item {\bf Experimental result reproducibility}
    \item[] Question: Does the paper fully disclose all the information needed to reproduce the main experimental results of the paper to the extent that it affects the main claims and/or conclusions of the paper (regardless of whether the code and data are provided or not)?
    \item[] Answer: \answerYes{}
    \item[] Justification: Section~3 details all models, nonconformity scores, and the 100-split protocol with 40/30/30 partitioning and guarded stratification. Hyperparameters are specified in Section~3 and Appendix~C. Evaluation metrics are defined in Eqs.~(5)--(8). Appendix~G reports split-integrity checks and deterministic reproducibility via SHA-256 hashes of saved result artifacts.

\item {\bf Open access to data and code}
    \item[] Question: Does the paper provide open access to the data and code, with sufficient instructions to faithfully reproduce the main experimental results, as described in supplemental material?
    \item[] Answer: \answerNo{}
    \item[] Justification: The Pew Research Center American Trends Panel data are available through an application process at \url{https://www.pewresearch.org/datasets/}. Full reproduction code is planned for release in a public repository upon acceptance. The current submission therefore does not provide open code, although the paper includes methodological detail intended to support independent replication.

\item {\bf Experimental setting/details}
    \item[] Question: Does the paper specify all the training and test details (e.g., data splits, hyperparameters, how they were chosen, type of optimizer) necessary to understand the results?
    \item[] Answer: \answerYes{}
    \item[] Justification: Section~3 describes all base predictors (prior baseline, ordered logistic, XGBoost, and MLP), the 66 survey features, the split protocol (40/30/30 with guarded stratification), $\alpha = 0.10$, the regularization parameter $\lambda = 50$, and the sensitivity branches. Appendix~C provides the remaining model-specification details, including optimizer choices and hyperparameters.

\item {\bf Experiment statistical significance}
    \item[] Question: Does the paper report error bars suitably and correctly defined or other appropriate information about the statistical significance of the experiments?
    \item[] Answer: \answerYes{}
    \item[] Justification: All results are computed over 100 independent random splits. Figures show 95\% confidence intervals. Paired effect sizes (Cohen's $d_z$) are reported for key comparisons. Standard deviations across splits are available in the appendix.

\item {\bf Experiments compute resources}
    \item[] Question: For each experiment, does the paper provide sufficient information on the computer resources (type of compute workers, memory, time of execution) needed to reproduce the experiments?
    \item[] Answer: \answerNo{}
    \item[] Justification: The submission does not report detailed hardware, memory, or wall-clock runtime information. The experiments are lightweight and CPU-tractable, but we do not currently provide a full compute-resource specification.
    
\item {\bf Code of ethics}
    \item[] Question: Does the research conducted in the paper conform, in every respect, with the NeurIPS Code of Ethics \url{https://neurips.cc/public/EthicsGuidelines}?
    \item[] Answer: \answerYes{}
    \item[] Justification: The research conforms to the NeurIPS Code of Ethics. We use publicly available survey data for secondary analysis and study fairness properties of prediction methods.

\item {\bf Broader impacts}
    \item[] Question: Does the paper discuss both potential positive societal impacts and negative societal impacts of the work performed?
    \item[] Answer: \answerYes{}
    \item[] Justification: Section~5 discusses practical implications for survey-based deployment. The positive impact is improved subgroup auditing and safer uncertainty reporting; the negative impact discussed is that reliance on marginal validity alone can certify systems that systematically under-cover minority subgroups.
    
\item {\bf Safeguards}
    \item[] Question: Does the paper describe safeguards that have been put in place for responsible release of data or models that have a high risk for misuse (e.g., pre-trained language models, image generators, or scraped datasets)?
    \item[] Answer: \answerNA{}
    \item[] Justification: The paper does not release pre-trained models or scraped datasets. It uses publicly available Pew survey data for secondary analysis.

\item {\bf Licenses for existing assets}
    \item[] Question: Are the creators or original owners of assets (e.g., code, data, models), used in the paper, properly credited and are the license and terms of use explicitly mentioned and properly respected?
    \item[] Answer: \answerYes{}
    \item[] Justification: The Pew Research Center American Trends Panel is properly cited. The data are used under Pew's standard terms of use for academic research.

\item {\bf New assets}
    \item[] Question: Are new assets introduced in the paper well documented and is the documentation provided alongside the assets?
    \item[] Answer: \answerNA{}
    \item[] Justification: The paper does not release new assets. Code will be released upon acceptance.

\item {\bf Crowdsourcing and research with human subjects}
    \item[] Question: For crowdsourcing experiments and research with human subjects, does the paper include the full text of instructions given to participants and screenshots, if applicable, as well as details about compensation (if any)? 
    \item[] Answer: \answerNA{}
    \item[] Justification: The paper uses secondary analysis of existing Pew survey data. No crowdsourcing or new human subjects research was conducted.

\item {\bf Institutional review board (IRB) approvals or equivalent for research with human subjects}
    \item[] Question: Does the paper describe potential risks incurred by study participants, whether such risks were disclosed to the subjects, and whether Institutional Review Board (IRB) approvals (or an equivalent approval/review based on the requirements of your country or institution) were obtained?
    \item[] Answer: \answerNA{}
    \item[] Justification: Secondary analysis of publicly available survey data. No IRB approval required.

\item {\bf Declaration of LLM usage}
    \item[] Question: Does the paper describe the usage of LLMs if it is an important, original, or non-standard component of the core methods in this research? Note that if the LLM is used only for writing, editing, or formatting purposes and does \emph{not} impact the core methodology, scientific rigor, or originality of the research, declaration is not required.
    \item[] Answer: \answerNA{}
    \item[] Justification: LLMs are not used as a component of the core methodology. The research uses standard ML models (XGBoost, ordered logistic, MLP) and conformal prediction methods.

\end{enumerate}

\end{document}